\DeclarePairedDelimiter\ceil{\lceil}{\rceil}
\DeclarePairedDelimiter\floor{\lfloor}{\rfloor}
\newtheorem{prop}{Proposition}
\newtheorem{problem}{Problem}
\newtheorem{corollary}{Corollary}
\newtheorem{theorem}{Theorem}
\newtheorem{definition}{Definition}
\newtheorem{example}{Example}
\newcommand{\continuation}{??}
\newenvironment{continueexample}[1]
 {\renewcommand{\continuation}{\ref{#1}}\excont[\textit{Cont'd}]}
 {\endexcont}
 \newcommand{\xs}{\mathbf{x}}
 \newcommand{\us}{\mathbf{u}}
\title{\LARGE \bf Temporal Relaxation of Signal Temporal Logic Specifications for Resilient Control Synthesis}
\author{Ali Tevfik Buyukkocak and Derya Aksaray \thanks{A.T Buyukkocak is with the Department of Aerospace Engineering and Mechanics, University of Minnesota, Minneapolis, MN {\tt\small \{buyuk012@umn.edu\}}, D. Aksaray is with the Department of Electrical and Computer Engineering, Northeastern University, Boston, MA {\tt\small \{d.aksaray@northeastern.edu\}}. \newline \indent This work was partially supported by MnDRIVE Graduate Research Fellowship from the University of Minnesota.}}
\begin{document}
\bibliographystyle{IEEEtran}
\maketitle
\thispagestyle{empty}
\pagestyle{empty}

\begin{abstract} \label{abstract}
We introduce a metric that can quantify the temporal relaxation of Signal Temporal Logic (STL) specifications and facilitate resilient control synthesis in the face of infeasibilities. The proposed metric quantifies a cumulative notion of relaxation among the subtasks, and minimizing it yields to structural changes in the original STL specification by i) modifying time-intervals, ii) removing subtasks entirely if needed. To this end, we formulate an optimal control problem that extracts state and input sequences by minimally violating the temporal requirements while achieving the desired predicates. We encode this problem in the form of a computationally efficient mixed-integer program. We show some theoretical results on the properties of the new metric. Finally, we present a case study of a robot that minimally violates the time constraints of desired tasks in the face of an infeasibility.


\end{abstract}


\vspace{-5mm}
\textcolor{black}{\section{Introduction} \label{sec:intro}}
Motion planning and control of cyber-physical systems often require the satisfaction of complex tasks with time constraints.  One way of expressing such tasks is via temporal logics. In particular, Signal Temporal Logic (STL) \cite{maler} is an expressive specification language that can define properties 
with explicit spatial and time parameters. Different than other temporal logics, STL contains \emph{predicates} in the form of inequalities and is endowed with a metric called robustness degree that can quantify how well a signal satisfies an STL specification \cite{donze}. Such a metric facilitates optimization based control synthesis with STL specifications \cite{karaman,raman,pant2017smooth}.

There are two main STL robustness degree metrics, i.e., space and time robustness which capture the effect of shifting the signal in space and time on the satisfaction, respectively \cite{donze}. While most of the studies in the literature propose solution methods to optimize space robustness or a variant of it \cite{raman,pant2017smooth,akazaki2015time,rodionova2016temporal,lindemann2019average,haghighi2019control,mehdipour2019arithmetic,buyukkocak2021control}, there are also some works emphasizing the benefits of optimizing time robustness \cite{lin2020optimization,rodionova2021time,lindemann2022temporal}. Moreover, allowing for negative robustness and maximizing it may provide a notion of minimal violation of an STL specification.

In real-world applications, autonomous systems operate under various disturbances (e.g., internal, external, human-triggered), and there might be cases where the desired STL specification becomes infeasible. In such cases, a resilient control synthesis requires to search for trajectories resulting in minimal violations of the original specification. One common way of achieving this is by relaxing the spatial requirements of the specification (e.g., \cite{mehdipour2020specifying}), such as satisfying $x\geq4.7$ instead of $x\geq5$. On the other hand, relaxing the time bounds by strictly enforcing the thresholds over signal values is not investigated much in the STL literature, although some applications (e.g., manipulating objects in fixed regions) would benefit from it. Such a notion of temporal relaxation has been introduced for Time Window Temporal Logic in \cite{aksaray2016dynamic,vasile2017time}, where an automata-theoretic approach is proposed to shrink or extend the corresponding time intervals of the tasks when needed.

In this paper, we introduce a metric that can quantify the temporal relaxation of STL specifications and facilitate resilient control synthesis in the face of infeasibilities. The proposed metric quantifies a cumulative notion of time relaxation among the subtasks. We propose a mixed-integer encoding for the temporal relaxation metric and formulate a computationally efficient mixed-integer program to minimize it. We present some theoretical results on the properties of the new metric. Finally, we show the efficacy of the new metric in a case study with an autonomous robot and compare the resulting behavior with the one obtained by optimizing existing metrics. 

This work is closely related to  \cite{lin2020optimization}, \cite{rodionova2021time,lindemann2022temporal,ghosh2016diagnosis,aksaray2021resilient}. Time robustness in control synthesis is considered in \cite{lin2020optimization}, where the goal is to maximize the time robustness along with the space robustness for a limited family of STL specifications; however, the infeasibility of STL specifications is not discussed. The authors of \cite{rodionova2021time} introduce the mixed-integer encoding of the standard time robustness \cite{donze} by counting the consecutive satisfactions and violations. Similarly, time robustness under time shifts in the stochastic signals is assessed along with a risk measure of STL failure in \cite{lindemann2022temporal}. While maximizing time robustness may lead to satisfaction of the STL specification even when the signal is shifted along the time (e.g., delay in the signal), it may not necessarily result in minimal temporal relaxation of the STL specifications in the presence of violations. This is due to the fact that the quantity of time robustness metric is dominated by the critical violations (i.e., use of $\min$/$\max$ functions), which makes it hard to differentiate individual task relaxations. Furthermore, maximizing time robustness especially in the violation cases  \cite{lin2020optimization,rodionova2021time} requires a significant computational effort, which is also illustrated in our case study. Alternatively, repairing STL specifications is also addressed in \cite{ghosh2016diagnosis}, where an iterative algorithm is proposed to repair infeasible STL specifications by relaxing thresholds over signal values, penalizing relaxations, and determining possible interval modifications based on the relaxed thresholds. However, this work is not equivalent to minimizing temporal relaxation and its iterative nature demands more computational effort. Finally, a notion of temporal relaxation is investigated in \cite{aksaray2021resilient}, where an algorithm is proposed to plan trajectories that satisfy iteratively changing STL specification (i.e., shifting STL). In particular, the original STL specification is considered but its time intervals are from the current time instant to the end of the time interval which allows for achieving STL tasks later. However, this work is limited to safety and persistence behaviors. To the best of our knowledge, there is no framework in the literature that minimizes a temporal relaxation metric facilitating both shrinkage and expansion of the time intervals and minimally changing the structure of the STL specification by partially removing subtasks in a computationally efficient manner.

\textit{Notation:}
The set of non-negative (positive) real and integer numbers are denoted by $\mathbb{R}_{\ge0}$ ($\mathbb{R}^{+}$) and $\mathbb{Z}_{\ge0}$ ($\mathbb{Z}^{+}$), respectively. The set of $n$-dimensional real valued vectors is denoted by $\mathbb{R}^n$. Floor and ceiling functions are designated by $\floor*{\cdot}$ and $\ceil*{\cdot}$, respectively. The set cardinality is represented by $\vert\cdot\vert$. Minkowski sum of two sets $A,B\subset\mathbb{Z}$ is denoted by $\oplus$ where $A\oplus B=\{a+b\ |\ a\in A, b\in B\}$. With a slight abuse of notation, we also allow $a\oplus B=\{a+b\ |\ b\in B\}$. 

\textcolor{black}{\section{Signal Temporal Logic} \label{sec:STL}}
Signal Temporal Logic (STL) \cite{maler} can express rich properties of time series. In this paper, we specify desired system behaviors with the following expressive STL fragment\footnote{As any STL specification can be represented in negation-free form \cite{ouaknine2008some}, we do not present \textit{negation} operator throughout the paper. Moreover, the Until operator of STL is intentionally not included in this paper as there are multiple ways of its temporal relaxation.}:
\begin{equation}
\small
\label{eq:STL_fragment}
\begin{split}
\Phi &::= \phi\  |\  \Phi_1\wedge\Phi_2\ |\ {\Phi_1\vee\Phi_2}\ |\ F_{[a,b]}\Phi\ |\ G_{[a,b]}\Phi,\\
\phi &::=  F_{[a,b]}\varphi\ |\ G_{[a,b]}\varphi, \\
\varphi &::= \mu\ |\ \varphi_1\wedge\varphi_2 \ |\ \varphi_1\vee\varphi_2,
\end{split}   
\end{equation}


\noindent where $\Phi, \phi, \varphi$ are STL specifications and subtasks; $\mu$ is a predicate in an inequality form such as $\mu=p(\xs_t)\geq\, 0$ with the value of a discrete-time signal $\xs:\mathbb{Z}_{\ge0} \to\mathbb{R}^n$ at time $t$ and a function $p:\mathbb{R}^n \to \mathbb{R}$. 
While $(\xs,t)\models F_{[a,b]}\Phi$ implies that $\Phi$ holds at some time instant within $t\oplus[a,b]$, $(\xs,t)\models G_{[a,b]}\Phi$ requires $\Phi$ to hold at all time instants within the same time interval, where $a,b\in\mathbb{Z}_{\ge0}$ are finite
time bounds with $b\!>\! a$. It is worth noting that the major restriction of the syntax in \eqref{eq:STL_fragment} is not allowing for the conjunction or disjunction of the temporal operators with bare predicates such as $\Phi=F_{[a,b]}(\mu_1\wedge F_{[c,d]}\mu_2)$.


STL is endowed with real-valued functions, robustness degree, that are used to quantify the satisfaction of an STL specification $\Phi$ with respect to a signal $(\mathbf{x},t)$. While positive robustness degree indicates the satisfaction of $\Phi$, negative one represents a violation. The space and time robustness functions can be formally defined as below:

\begin{definition} (STL Space Robustness \cite{donze}) Space robustness is a real-valued function $\rho(\cdot) \in \mathbb{R}$ that is used to quantify how much the signal $\xs$ can be shifted in space such that the specification is still satisfied. It is defined recursively as:  
\begin{equation}
\small
\setlength{\jot}{.8pt}
\begin{split}
\rho(\mathbf{x},\mu,t ) &:=p(\mathbf{x}_t), \\
\rho(\mathbf{x},\Phi_{1}\wedge\Phi_{2},t)& :=\min\big(\rho(\mathbf{x},\Phi_{1},t),\rho (\mathbf{x},\Phi_{2},t) \big), \\
\rho(\mathbf{x},\Phi_{1}\vee\Phi_{2},t)& :=\max \big(\rho(\mathbf{x},\Phi_{1},t ) ,\rho  (\mathbf{x}, \Phi _{2},t)\big), \\
\rho(\mathbf{x},F_{[a,b]}\Phi,t)&:=\mathop{\max }_{\mathop{t}^{'}\in t\oplus[a,b]}\rho(\mathbf{x}, \Phi ,t'),\\
\rho(\mathbf{x},G_{[a,b]}\Phi,t)&:=\mathop{\min }_{\mathop{t}^{'}\in t\oplus[a,b]}\rho(\mathbf{x}, \Phi ,t').
\end{split}
\label{eq:space_robustness}
\end{equation}
\end{definition}

\begin{definition} (STL Time Robustness \cite{donze}) Right\! $(\!\texttt{+})$ and left\! $(\!\texttt{-})$
time robustness of an STL specification quantify how much the signal $\xs$ can
be shifted either right or left in time such that the specification is still satisfied. The time robustness of $\xs$ with respect to a predicate $\mu$ can be computed as:
\vspace{-1mm}
\begin{equation}\label{eq:time_robustness}
    \vspace{-4mm}
\small
\setlength{\jot}{.1pt}
\begin{split}
    \theta^+(\xs,\mu,t)&:=\mathcal{X}(\mu,t)\cdot\max\Big\{ d\geq0\; \textit{s.t.}\; \forall t'\in [t,t+ d],\\ &\hspace{43mm}\mathcal{X}(\mu,t')=\mathcal{X}(\mu,t)\Big\},\\
    \theta^-(\xs,\mu,t)&:=\mathcal{X}(\mu,t)\cdot\max\Big\{ d\geq0\; \textit{s.t.}\; \forall t'\in [t- d,t],\\ &\hspace{43mm}\mathcal{X}(\mu,t')=\mathcal{X}(\mu,t)\Big\},
\end{split}
\end{equation}
where the characteristic function $\mathcal{X}(\mu,t)$ is
\begin{equation}
\small
\begin{matrix}
\mathcal{X}(\mu ,t):=\; \left\{\; \begin{matrix*}[l]
-1, & p(\xs_t)<0, \\
+1, & p(\xs_t)\geq 0.  \end{matrix*}\right.
\end{matrix}
\label{eq:char_eq}    
\end{equation}

\end{definition}

\noindent After computing the right/left time robustness of a predicate, the rules in \eqref{eq:space_robustness} can be used to quantify the overall time robustness of a signal with respect to an STL specification. There also exists a combined notion called space-time robustness, where the inequalities in the characteristic equation are defined based on a desired space robustness threshold \cite{donze}. 

\begin{example} \label{example_1}
Consider two signals $\xs'$ and $\xs''$ starting from $t=0$ and illustrated in Fig.~\ref{fig:example1}. Suppose that the specification is $\Phi_1=G_{[15,60]}\xs\geq h_1$ meaning that the signal has to satisfy $\xs_t\geq h_1$ for all $t \in [15,60]$. While both signals violate $\Phi_1$, $\xs'$ (blue) stays in the desired region longer than $\xs''$ (red). However, the (right) time robustness notion cannot differentiate this due to using $\min$ function in its computation as in \eqref{eq:space_robustness}. Specifically, $\theta^+(\xs',\Phi_1,0) = \theta^+(\xs'',\Phi_1,0) = \min_{t \in [15,60]} \theta^+(\xs,\xs_t\geq h_1,t)$, hence both signals have the same time robustness that is negative due to violation, and its length is illustrated by the purple bar in Fig.~\ref{fig:example1}.

Now, suppose the specification is updated as $\Phi_2=\Phi_1\wedge F_{[75,120]}\xs\leq h_2$
meaning that the signal needs to satisfy $\xs_t\leq h_2$ for some time $t\in [75,120]$ in addition to satisfying $\Phi_1$. Despite the blue signal's satisfaction of the second subtask, the violation of $\Phi_1$ dominates the computation of time robustness of $\Phi_2$. Furthermore, the signal $\xs''$ slightly violates the second subtask, but the initial violation still dominates. As a result, the violation in the purple area indifferently determines the time robustness score of both signals and specifications with $\theta^+(\xs',\Phi_1,0)=\theta^+(\xs'',\Phi_1,0)=\theta^+(\xs',\Phi_2,0)=\theta^+(\xs'',\Phi_2,0)$. Similar examples are straightforward to illustrate for the left time robustness as well.
\end{example}

\begin{figure}[htb!]
    \centering
    \includegraphics[trim=80 7 100 25,clip, width=\linewidth]{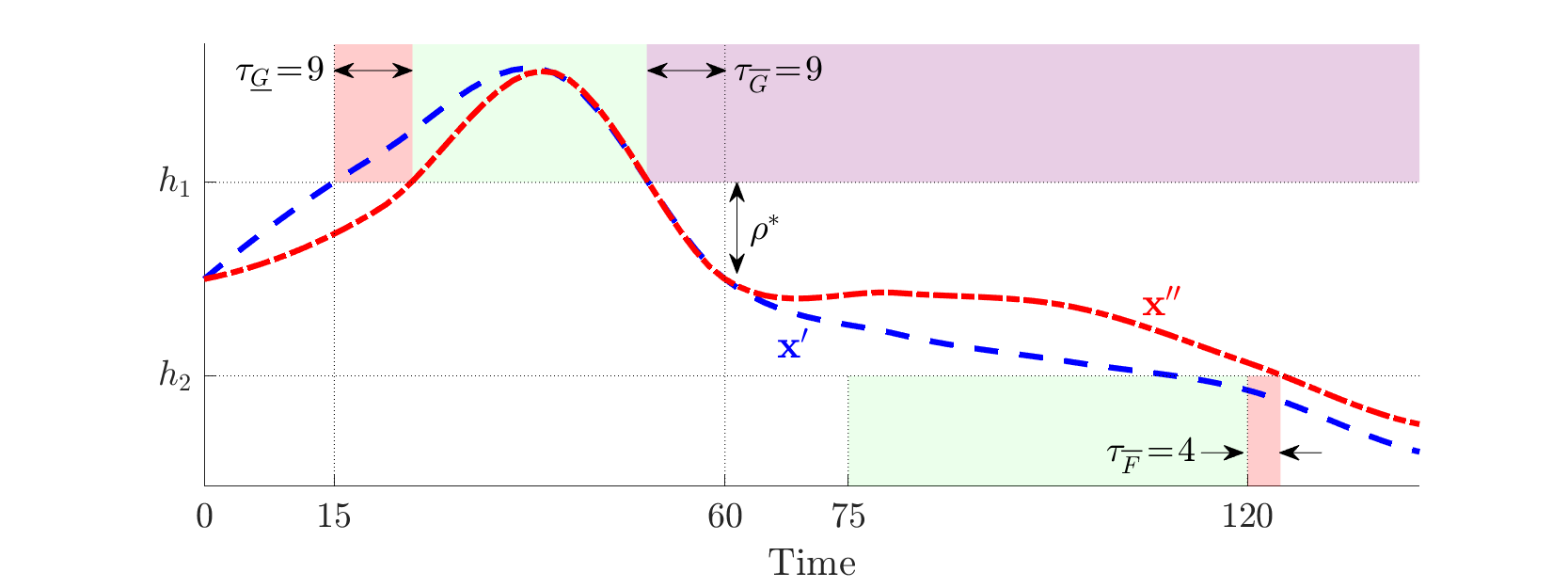}
    \caption{Two signals that violate the given STL specifications by different amounts. Green regions are the targets the system must reside in to achieve the specifications. The areas of violation is denoted by the color of the respective signal, while the purple region depicts violation by both signals.}
    \label{fig:example1}
\end{figure}

Note that the standard space robustness in \eqref{eq:space_robustness} may be insufficient to differentiate signal behaviors as well. For instance, $\rho(\xs',\Phi_1,0)=\rho(\xs'',\Phi_1,0)=-\rho^*$. However, this issue is addressed by several studies via modified space robustness metrics and extended STL syntax (e.g., \cite{akazaki2015time,rodionova2016temporal,lindemann2019average,haghighi2019control,mehdipour2019arithmetic,buyukkocak2021control}). Nonetheless, such modified metrics do not assess satisfaction beyond the original time intervals, which motivate us to introduce a temporal relaxation metric for STL specifications.

One may also consider to use average time robustness values over the intervals instead of $\min$/$\max$ functions similar to existing space robustness modifications. However, optimizing average time robustness may not result in desired relaxed notions. For instance, let $\Phi_3=G_{[75,120]}\xs\geq h_2$. As depicted in Fig.~\ref{fig:example1}, $\xs'$ leaves the $\xs\geq h_2$ region before the deadline of $t=120$. This yields the time robustness values between the violation instant and $t=120$ to become very small (negative) since there is no satisfaction instant remained until the end of the mission. Hence, arbitrarily large mission horizons yield violated time robustness values to be arbitrarily small. In this regard, averaging them with the satisfactory positive small values may not provide the desired understanding of the signal behavior.

{\section{Minimal Temporal Relaxation of STL Specifications} \label{sec:minimal_temp_rlx_STL}}
The spatial relaxation of STL specifications is generally possible by allowing the system to minimally violate its spatial requirements (i.e., maximizing the space robustness while letting it be negative). For example, if the specification is entering a desired region, the spatial relaxation will be approaching the region as close as possible given the time interval.  
However, spatial relaxation may not be feasible for some missions, e.g., manipulating an object in a fixed region.
Therefore, extending or shrinking the time intervals to allow for
late or early satisfaction can be practical in some scenarios.
Nonetheless, the temporal relaxation of STL is not broadly investigated, and the standard time robustness cannot differentiate partial satisfactions as illustrated in Example \ref{example_1}.



\subsection{Temporal Relaxation of STL}\label{sec:STL_temp_relax}
Suppose that a subtask $\phi$ in the STL specification $\Phi$ (as per the syntax in \eqref{eq:STL_fragment}) is defined over a time interval of $I=[a,b]$. For every such subtask, we have the temporally relaxed versions $\phi^\tau= F_{[a-\tau_{\underline{F}},b+\tau_{\overline{F}}]}\varphi$ and $\phi^\tau= G_{[a+\tau_{\underline{G}},b-\tau_{\overline{G}}]}\varphi$ where $\tau_{\underline{F}},\tau_{\overline{F}},\tau_{\underline{G}},\tau_{\overline{G}}\in\mathbb{Z}_{\geq0}$ are temporal relaxation parameters. That is, the system may \textit{eventually} satisfy $\varphi$ within a longer time interval and \textit{always} satisfy it within a shorter interval, both of which imply the relaxation of the original requirement. Moreover, when multiple temporal operators are nested, the temporal relaxation is enabled for the innermost ones. For example, if the original specification is $\Phi= F_{[c,d]}G_{[a,b]}\varphi$, then the relaxed version of it is considered as $\Phi^\tau= F_{[c,d]}G_{[a+\tau_{\underline{G}} ,b-\tau_{\overline{G}}]}\varphi$.


Our approach will result in the satisfaction of the predicates and/or their combinations only within the original \textit{globally} interval $I_G$ (potentially within an interval shorter than the original) and remove the subtask otherwise. On the other hand, the interval of the \textit{finally} operator $I_F$ is extended, and when not bounded, this extension may yield excessive relaxations, e.g, satisfying $\mu$ at $t=100$ where the specification is $\phi=F_{[5,10]}\mu$.  
For this reason, our formulation allows the user to specify acceptable bounds to relax the original interval $I=[a,b]$. Accordingly, the new interval bounding the maximal relaxation is defined as $\overline{I}_F=(a-\ceil*{\gamma_F\!\cdot\! \vert I_F\vert},b+\ceil*{\gamma_F\!\cdot\! \vert I_F\vert})$ where $\gamma_F>0$ is a user-defined tolerance parameter. 

A similar tolerance $0\!<\!\gamma_G\!\leq\!1$ can be defined for the \textit{globally} as well where the relaxation is allowed over $\overline{I}_G\!=\![a,b]\setminus\Big(\big[a,a\!+\!\floor*{\gamma_G\cdot\frac{\vert I_G\vert}{2}}\big)\cup\big(b\!-\!\floor*{\gamma_G\cdot\frac{\vert I_G\vert}{2}},b\big]\Big)$. Note that, $\overline{I}_F$ and $\overline{I}_G$ are not the relaxed intervals, but they exclusively bound the allowable relaxation according to user preferences. 




Next, we define the temporal relaxation metrics that we want to keep minimum. 
The temporal relaxation metric simply comprises the amount of violation with respect to the maximum allowable relaxation. We define two normalized metrics below to measure the temporal relaxation among the subtasks with \textit{finally} and \textit{globally} operators, respectively.

\begin{definition}
(Temporal Relaxation Metric) Given an STL specification with the syntax in \eqref{eq:STL_fragment}, the temporal relaxation metrics for the \textit{finally} and \textit{globally} operators are defined as:

\noindent\textbf{Finally} 
\begin{equation}\label{eq:tau_F}\small
        \vspace{-4mm}
\setlength{\jot}{.95pt}
\begin{matrix}
    \tau(\xs,F_{[a,b]} \varphi,t):=\left\{ \begin{split}
    &\frac{\max(\tau_{\underline{F}},\tau_{\overline{F}})}{\gamma_F\!\cdot\!\vert I_F\vert},\hspace{1mm} \textit{if}\;\; \exists t'\in t\oplus[a-\tau_{\underline{F}},b+\tau_{\overline{F}}]\\ 
    &\hspace{1.5mm}\textit{s.t.}\ \gamma_F\!\cdot\!\vert I_F\vert\geq\tau_{\underline{F}},\tau_{\overline{F}}\geq 0\ \textit{and}\ (\xs,t')\models\varphi,\\
&\hspace{8mm}1,\hspace{34.5mm} \textit{otherwise}.
    \end{split}\right.
    \end{matrix}
\end{equation}
\textbf{Globally} 
\begin{equation}\label{eq:tau_G}\small
\setlength{\jot}{.95pt}
\begin{matrix}
    \tau(\xs,G_{[a,b]} \varphi,t):=\left\{ \begin{split}&\frac{\tau_{\underline{G}}+\tau_{\overline{G}}}{\gamma_G\!\cdot\!\vert I_G\vert},\hspace{6.5mm}\textit{if}\;\; \forall t'\in t\oplus[a+\tau_{\underline{G}},b-\tau_{\overline{G}}]\\
   &\hspace{.5mm}\textit{s.t.}\; \gamma_G\!\cdot\!\frac{\vert I_G\vert}{2}\geq\tau_{\underline{G}},\tau_{\overline{G}}\geq 0\ \textit{and}\   (\xs,t')\models\varphi, \\
    &\hspace{5mm}1,\hspace{37mm} \textit{otherwise}.
    \end{split}\right.
    \end{matrix}
\end{equation}
\end{definition}
\vspace{-4mm}
\begin{prop}\label{prop_1}
Suppose that $\tau_{\underline{F}},\tau_{\overline{F}}\in\mathbb{Z}_{\geq0}$. Let $\phi\!=\!F_{[a,b]}\, \varphi$ be a finally subtask endowed with the temporal relaxation metric $\tau(\xs,F_{[a,b]} \varphi,t)$ defined according to \eqref{eq:tau_F}. When $\tau(\xs,F_{[a,b]} \varphi,t)$ is minimized, the resulting relaxed interval $[a-\tau_{\underline{F}},b+\tau_{\overline{F}}]$ can have at least one of $\tau_{\underline{F}}$ and $\tau_{\overline{F}}$ equal to zero.
\end{prop}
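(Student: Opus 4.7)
The plan is a short case analysis based on the observation that the $F$ operator only requires the existence of a single time instant at which $\varphi$ holds, so the relaxed interval $[a-\tau_{\underline{F}}, b+\tau_{\overline{F}}]$ only needs to be ``pulled'' toward that single witness, never both ways.

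First, I would fix any optimal pair $(\tau_{\underline{F}}^{\star}, \tau_{\overline{F}}^{\star})$ attaining the minimum of $\tau(\mathbf{x}, F_{[a,b]}\varphi, t)$, and let $m = \max(\tau_{\underline{F}}^{\star}, \tau_{\overline{F}}^{\star})$ be the corresponding value of the numerator in \eqref{eq:tau_F}. By feasibility of this pair, there is at least one witness $t^{\prime} \in t \oplus [a - \tau_{\underline{F}}^{\star}, b + \tau_{\overline{F}}^{\star}]$ with $(\mathbf{x}, t^{\prime}) \models \varphi$. The next step is to split on where this witness lies relative to the unrelaxed interval $t\oplus[a,b]$.

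In the case $t^{\prime} \in t \oplus [a,b]$, the pair $(0,0)$ is already feasible, so the minimum is trivially attained with both parameters zero. In the case $t^{\prime} \in t \oplus [a - \tau_{\underline{F}}^{\star}, a)$, I would set $\tilde{\tau}_{\underline{F}} = (t+a) - t^{\prime}$ and $\tilde{\tau}_{\overline{F}} = 0$; then $0 \le \tilde{\tau}_{\underline{F}} \le \tau_{\underline{F}}^{\star} \le \gamma_F \cdot |I_F|$ and $t^{\prime}$ remains a valid witness, so $(\tilde{\tau}_{\underline{F}}, 0)$ is feasible with $\max(\tilde{\tau}_{\underline{F}}, 0) \le m$. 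By optimality of $m$, this inequality is an equality, giving another minimizer with $\tau_{\overline{F}} = 0$. The symmetric case $t^{\prime} \in t \oplus (b, b + \tau_{\overline{F}}^{\star}]$ is handled identically by setting $\tilde{\tau}_{\overline{F}} = t^{\prime} - (t+b)$ and $\tilde{\tau}_{\underline{F}} = 0$, yielding a minimizer with $\tau_{\underline{F}} = 0$.

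I do not expect any real obstacle here: the proof is a one-witness argument and the integrality of $\tau_{\underline{F}}, \tau_{\overline{F}} \in \mathbb{Z}_{\geq 0}$ is preserved because $t^{\prime}$, $a$, and $b$ are integer time indices, so the constructed $\tilde{\tau}_{\underline{F}}$ and $\tilde{\tau}_{\overline{F}}$ are automatically non-negative integers bounded above by $\gamma_F \cdot |I_F|$. The only mildly delicate point is phrasing the conclusion correctly: the proposition claims that such a minimizer \emph{can} be chosen with one of the two parameters equal to zero (not that \emph{every} minimizer has this property), which is precisely what the above construction produces.
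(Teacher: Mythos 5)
Your proposal is correct and follows essentially the same idea as the paper's proof: since $F$ needs only a single witness instant, the interval need only be relaxed toward that witness, so one of $\tau_{\underline{F}},\tau_{\overline{F}}$ can be taken to be zero at the minimum. Your version is simply a more explicit, constructive case analysis (witness inside, left of, or right of $[a,b]$) of what the paper argues by a brief contradiction with a without-loss-of-generality placement of the witness.
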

\vspace{-3mm}
\begin{proof}
  Without loss of generality, let $t=0$ and the required single satisfaction happen at $b+\tau_{\overline{F}}$. Suppose that the minimally relaxed interval can only be obtained as $[a-\tau_{\underline{F}},b+\tau_{\overline{F}}]$ with $\tau_{\underline{F}}\not=0$ and $\tau_{\overline{F}}\not=0$. Now, consider another interval $[a,b+\tau_{\overline{F}}]$ with $\tau_{\underline{F}}=0$. 
  Since $(\xs,t)\models F_{[a,b+\tau_{\overline{F}}]}\varphi \Longrightarrow (\xs,t)\models F_{[a-\tau_{\underline{F}},b+\tau_{\overline{F}}]}\varphi$ via the STL semantics in \eqref{eq:space_robustness}, $[a-\tau_{\underline{F}} , b+\tau_{\overline{F}}]$
with $\tau_{\underline{F}}\not=0$ and $\tau_{\overline{F}}\not=0$ cannot be the minimally relaxed interval. Hence, at least one of $\tau_{\underline{F}}$ and $\tau_{\overline{F}}$ can be zero when the interval is minimally relaxed. 
\end{proof}

For instance, consider $\Phi=F_{[5,10]}\varphi$, and the closest satisfaction is $(\xs,15)\models\varphi$. The relaxed specification would be $\Phi^\tau=F_{[5,15]}\varphi$. Alternatively, if $(\xs,3)\models\varphi$ is the closest satisfaction instant, then the relaxed specification would be $\Phi^\tau=F_{[3,10]}\varphi$. When $\tau_{\underline{F}}=\tau_{\overline{F}}=0$, then the original specification is satisfied, i.e., $(\xs,t)\models F_{[a,b]}\varphi$, without any temporal relaxation, i.e., $\tau(\xs,F_{[a,b]} \varphi,t)=0$.

Note that the temporal relaxation metrics in \eqref{eq:tau_F} and \eqref{eq:tau_G} are normalized and take values within $[0,1]$. That is, when no relaxation is needed and the original specification is achieved, the relaxation metric is $0$ for each operator, and when the allowable relaxation amount is exceeded or the subtask is completely failed, the metric becomes $1$ and the subtask is removed. Hence, the aim will be minimizing the temporal relaxation during the satisfaction of an STL specification $\Phi$ in the face of infeasibilities.

Starting from $\tau(\xs,F_{[a,b]} \varphi, t)$ and $\tau(\xs,G_{[a,b]} \varphi, t)$ in \eqref{eq:tau_F} and \eqref{eq:tau_G}, we recursively define a general temporal relaxation metric $\tau(\xs,\Phi,t)$ for the STL specification $\Phi$ as follows:
\begin{subequations}
\setlength{\jot}{.95pt}
\begin{align}
\tau(\xs,\Phi_{1}\wedge\Phi_{2}, t)& :=\frac{\tau(\xs,\Phi_{1}, t)\ +\ \tau(\xs,\Phi_{2}, t)}{2},\label{eq:temp_relax_metric_a}\noeqref{eq:temp_relax_metric_a} \\
\tau(\xs,\Phi_{1}\vee\Phi_{2}, t)& :=\min\big(\tau(\xs,\Phi_{1}, t), \tau(\xs,\Phi_{2},t)\big),\label{eq:temp_relax_metric_b}\noeqref{eq:temp_relax_metric_b} \\
\tau(\xs,G_{[a,b]}\Phi,t)& :=\mathop{\max }_{\mathop{t}'\in t\oplus[a,b]}\tau(\xs,\Phi,t'),\label{eq:temp_relax_metric_c}\noeqref{eq:temp_relax_metric_c}\\
\tau(\xs,F_{[a,b]}\Phi,t)& :=\mathop{\min }_{\mathop{t}'\in t\oplus[a,b]}\tau(\xs,\Phi,t').\label{eq:temp_relax_metric_d}\noeqref{eq:temp_relax_metric_d}
\end{align}
\label{eq:temp_relax_metric}
\end{subequations}
By definition, the temporal relaxation is allowed for the innermost temporal STL operators via \eqref{eq:tau_F} and \eqref{eq:tau_G} when multiple of them are nested. In other words, when $\Phi=F_{[a,b]}\varphi$ or $\Phi=G_{[a,b]}\varphi$, i.e., $\Phi=\phi$ as per \eqref{eq:STL_fragment}, we use temporal relaxation metrics in \eqref{eq:tau_F} and \eqref{eq:tau_G}, respectively. However, when there are multiple nested temporal operators, e.g., $\Phi=G_{[a,b]}F_{[c,d]}\varphi$, we use the definitions in \eqref{eq:temp_relax_metric_c} and \eqref{eq:temp_relax_metric_d} for the outer ones together with \eqref{eq:tau_F} and \eqref{eq:tau_G} inside them.

\begin{corollary}\label{corollary_1}
For a given STL specification $\Phi$ with the syntax in \eqref{eq:STL_fragment}, the overall relaxation metric defined according to \eqref{eq:temp_relax_metric} is bounded such that $0\leq\tau(\xs,\Phi,t)\leq1$, $\forall t$.
\end{corollary}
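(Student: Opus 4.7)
The plan is to prove the statement by structural induction on the STL formula $\Phi$, following the recursive definition of $\tau$ in \eqref{eq:temp_relax_metric} together with the base definitions \eqref{eq:tau_F} and \eqref{eq:tau_G}.

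For the base cases I would take $\Phi=\phi=F_{[a,b]}\varphi$ and $\Phi=\phi=G_{[a,b]}\varphi$. For the finally operator, the feasibility constraints in \eqref{eq:tau_F} force $0\leq\tau_{\underline{F}},\tau_{\overline{F}}\leq\gamma_F\cdot|I_F|$, so the ratio $\max(\tau_{\underline{F}},\tau_{\overline{F}})/(\gamma_F\cdot|I_F|)$ lies in $[0,1]$ whenever some satisfying $t'$ exists in the allowed window. If no such $t'$ exists (the subtask is removed), the metric is assigned the value $1$ as noted in the paragraph just above the corollary. The globally case is symmetric: the constraint $\gamma_G\cdot|I_G|/2\geq\tau_{\underline{G}},\tau_{\overline{G}}\geq 0$ combined with the denominator $\gamma_G\cdot|I_G|$ yields $(\tau_{\underline{G}}+\tau_{\overline{G}})/(\gamma_G\cdot|I_G|)\in[0,1]$, with the value $1$ again used when no admissible relaxation makes $\varphi$ hold throughout.

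For the inductive step, I would assume $\tau(\xs,\Phi_1,t),\tau(\xs,\Phi_2,t)\in[0,1]$ for all $t$ and check each recursive rule in \eqref{eq:temp_relax_metric}. Conjunction preserves $[0,1]$ because the average of two numbers in $[0,1]$ is in $[0,1]$; disjunction preserves $[0,1]$ because the minimum of two such numbers is in $[0,1]$; and the outer $G_{[a,b]}$ and $F_{[a,b]}$ rules preserve $[0,1]$ because $\max$ and $\min$ over a finite set of values in $[0,1]$ remain in $[0,1]$. Since any formula $\Phi$ admitted by \eqref{eq:STL_fragment} is built from these operators applied to innermost subtasks $\phi$, the bound propagates to $\tau(\xs,\Phi,t)$.

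The only subtle point is ensuring the induction hypothesis applies at the correct level: when multiple temporal operators are nested, the innermost $F$/$G$ pairs are evaluated with \eqref{eq:tau_F}/\eqref{eq:tau_G} (base case), while outer temporal operators use \eqref{eq:temp_relax_metric_c}/\eqref{eq:temp_relax_metric_d} (inductive step), consistent with the convention stated after \eqref{eq:temp_relax_metric}. The main obstacle, and the step most worth being careful about, is the degenerate case in which the set defining the base-case metric in \eqref{eq:tau_F} or \eqref{eq:tau_G} is empty; I would explicitly invoke the convention $\tau:=1$ for removed subtasks to keep the bound tight, and then observe that all recursive combinators in \eqref{eq:temp_relax_metric} are monotone in $[0,1]$ so no operator can push the value outside this interval. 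This completes the structural induction and establishes $0\leq\tau(\xs,\Phi,t)\leq 1$ for every admissible $\Phi$ and every $t$.
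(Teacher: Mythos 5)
Your proof is correct and takes essentially the same route as the paper, whose proof of Corollary~\ref{corollary_1} is just the one-line observation that the bound follows directly from the semantics in \eqref{eq:temp_relax_metric} together with \eqref{eq:tau_F} and \eqref{eq:tau_G}. Your structural induction (base cases bounded by the feasibility constraints and the convention $\tau=1$ for removed subtasks, combinators that preserve $[0,1]$) simply spells out the details the paper leaves implicit.
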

\vspace{-2mm}
\begin{proof}
This follows directly from the semantics of temporal relaxation metric in \eqref{eq:temp_relax_metric} together with \eqref{eq:tau_F} and \eqref{eq:tau_G}.
\end{proof}

It is worth noting that unlike the standard STL quantifying metrics (e.g., \cite{maler}), we use $\min$ for the disjunction or \textit{finally} (in \eqref{eq:temp_relax_metric_b} and \eqref{eq:temp_relax_metric_d}), and $\max$ for the \textit{globally} (in \eqref{eq:temp_relax_metric_c}). This is because the lower temporal relaxation metric implies the closer satisfaction to the original specification. Moreover, the averaging in \eqref{eq:temp_relax_metric_a} enables us to measure collective performance that is not dominated by critical values.

\begin{prop}\label{prop_2}
For a given STL specification $\Phi$ with the syntax in \eqref{eq:STL_fragment}, the temporal relaxation metric defined according to \eqref{eq:temp_relax_metric} is sound in the sense that $\tau(\xs,\Phi,t)\!=\!0\!\Longrightarrow\!(\xs,t)\!\models\!\Phi$\footnote{The reverse $\tau(\xs,\Phi,t)=0\Longleftarrow(\xs,t)\models\Phi$ may not always be true as the satisfaction over the original time intervals implies satisfaction over arbitrarily relaxed time intervals as well with nonzero relaxation parameters according to \eqref{eq:tau_F} and \eqref{eq:tau_G}. 
}.
\end{prop}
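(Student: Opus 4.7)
The plan is to proceed by structural induction on $\Phi$ following the syntax in \eqref{eq:STL_fragment}, leveraging Corollary \ref{corollary_1} for non-negativity of the metric throughout. Since the paper notes that any STL specification can be put in negation-free form, I will not treat $\neg$ as a separate inductive case.

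For the base cases, I would handle the innermost temporal subtasks $\phi = F_{[a,b]}\varphi$ and $\phi = G_{[a,b]}\varphi$, where the metric is given by \eqref{eq:tau_F} and \eqref{eq:tau_G} respectively. If $\tau(\xs, F_{[a,b]}\varphi, t) = 0$, then $\max(\tau_{\underline{F}}, \tau_{\overline{F}})/(\gamma_F \cdot |I_F|) = 0$, which forces $\tau_{\underline{F}} = \tau_{\overline{F}} = 0$; the existential clause in \eqref{eq:tau_F} then yields $\exists t' \in t \oplus [a,b]$ with $(\xs, t') \models \varphi$, i.e., $(\xs, t) \models F_{[a,b]}\varphi$. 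The globally case is analogous: $\tau = 0$ forces $\tau_{\underline{G}} = \tau_{\overline{G}} = 0$, and the universal clause in \eqref{eq:tau_G} gives $\forall t' \in t \oplus [a,b], (\xs, t') \models \varphi$.

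For the inductive step I would consider the four recursive cases in \eqref{eq:temp_relax_metric}. For $\Phi_1 \wedge \Phi_2$, since $\tau(\xs,\Phi_1,t), \tau(\xs,\Phi_2,t) \geq 0$ by Corollary \ref{corollary_1}, their average being zero forces each to vanish, and the induction hypothesis gives $(\xs,t) \models \Phi_i$ for both $i$. For $\Phi_1 \vee \Phi_2$, $\min(\tau(\xs,\Phi_1,t), \tau(\xs,\Phi_2,t)) = 0$ implies at least one of the two metrics is zero (again using non-negativity), so the induction hypothesis yields satisfaction of at least one disjunct. For $G_{[a,b]}\Phi'$, the condition $\max_{t' \in t \oplus [a,b]} \tau(\xs, \Phi', t') = 0$ together with non-negativity forces $\tau(\xs,\Phi',t') = 0$ for every $t'$ in the interval, and the induction hypothesis promotes this to $(\xs,t') \models \Phi'$ for all such $t'$. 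For $F_{[a,b]}\Phi'$, the minimum being zero is witnessed by some $t^\star \in t \oplus [a,b]$ with $\tau(\xs,\Phi',t^\star) = 0$, yielding $(\xs,t^\star) \models \Phi'$ by induction.

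The routine part is essentially bookkeeping once non-negativity is invoked; the only subtle point I expect to flag is the interaction between the base-case semantics \eqref{eq:tau_F}--\eqref{eq:tau_G} and the outer recursion \eqref{eq:temp_relax_metric_c}--\eqref{eq:temp_relax_metric_d} when nested temporal operators are present. As the paper emphasizes, temporal relaxation is applied only at the innermost temporal operator, so when the induction reaches a subformula of the form $F_{[a,b]}\varphi$ or $G_{[a,b]}\varphi$ (with $\varphi$ a Boolean combination of predicates), one must invoke the base case rather than the recursive rule; this is precisely where the argument could be misread, so I would spell out this dichotomy explicitly. No further obstacle is anticipated, since $\min$/$\max$ being zero under a non-negativity constraint is a clean reduction in every remaining case.
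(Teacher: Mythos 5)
Your proposal is correct and follows essentially the same route as the paper: the paper's proof is a compressed version of exactly this argument, noting that $\tau(\xs,\Phi,t)=0$ propagates through the recursion in \eqref{eq:temp_relax_metric} to force $\tau_{\underline{F}}=\tau_{\overline{F}}=\tau_{\underline{G}}=\tau_{\overline{G}}=0$ at the innermost subtasks, which by \eqref{eq:tau_F} and \eqref{eq:tau_G} yields satisfaction within the original intervals and hence $(\xs,t)\models\Phi$. Your structural induction with explicit use of non-negativity merely spells out the same reasoning case by case, so no substantive difference or gap to report.
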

\begin{proof} 
By recursive definition in \eqref{eq:temp_relax_metric} together with \eqref{eq:tau_F} and \eqref{eq:tau_G}, no temporal relaxation, i.e., $\tau(\xs,\Phi,t)\!=\!0$, implies that all subtasks in $\Phi$ expressed according to the syntax in \eqref{eq:STL_fragment} have either $\tau(\xs,F_{[a,b]}\varphi,t')\!=\!0$ or $\tau(\xs,G_{[a,b]}\varphi,t')\!=\!0$ with $t'\geq t$. Meaning, for all relaxation parameters in $\Phi$, we have $\tau_{\underline{F}}\!=\!\tau_{\overline{F}}\!=\!\tau_{\underline{G}}\!=\!\tau_{\overline{G}}=0$ as per \eqref{eq:tau_F} and \eqref{eq:tau_G} which, by definition, yields satisfaction within the original interval $[a,b]$ for all subtasks. Hence, it follows by construction that the overall specification is satisfied on time, $(\xs,t)\!\models\!\Phi$.
\end{proof}
\vspace{-2mm}

Proposition \ref{prop_2} along with Corollary \ref{corollary_1} can be interpreted as follows: in the best case, the given STL specification is satisfied within the original time intervals with $\tau(\xs,\Phi,t)\!=\!0$. Therefore, an extra performance such as achieving tasks for longer time than required is not demanded unlike the optimization of other quantifying metrics of STL. However, when it is not feasible to achieve $\Phi$, the minimization of $\tau(\xs,\Phi,t)$ leads to the satisfaction of spatial requirements specified in $\Phi$ with a minimal temporal relaxation. Furthermore, this minimal relaxation may potentially yield structural changes on the $\Phi$ such as compromising highly demanding subtasks which may jeopardize the success of the others.

\begin{continueexample}{example_1} Consider again the two signals $\xs'$ (blue) and $\xs''$ (red) in  Fig.~\ref{fig:example1}. Recall that both signals have the same time robustness value even for different STL specifications $\Phi_1=G_{[15,60]}\xs\geq h_1$ and $\Phi_2=\Phi_1\wedge F_{[75,120]}\xs\leq h_2$. In this regard, the time robustness metric may not capture the signal behaviors completely. We now examine the performance of the new temporal relaxation metric for the same scenario. Let $I_1=[15,60]$ and $I_2=[75,120]$ with $\vert I_1 \vert=\vert I_2\vert = 46$. Assume that the tolerance parameters are defined as $\gamma_F=1$ and $\gamma_G=1$, and note the temporal relaxation parameters $\tau_{\overline{G}}=\tau_{\underline{G}}=9$ and $\tau_{\overline{F}}=4$ from Fig.~\ref{fig:example1}. As both signals violate $\Phi_1$, the violation amounts can be calculated using the proposed temporal relaxation metric as $\tau(\xs',\Phi_1,0)={9}/{46}$ and $\tau(\xs'',\Phi_1,0)={(9+9)}/{46}=18/46$ implying the temporal relaxation made for $\xs'$ is lower than that for $\xs''$. Similarly, when quantifying the violations for $\Phi_2$, the ones for $\Phi_1$ should neither dominate the computation nor be completely useless, but contribute to the calculation of the cumulative relaxation. Since $\xs'$ completely satisfies the \textit{finally} subtask, the only violation comes from $\Phi_1$ but averaged due to total number of two subtasks as $\tau(\xs',\Phi_2,0)=\frac{9}{2\cdot 46}$. On the other hand, $\xs''$ violates both subtasks leading to $\tau(\xs'',\Phi_2,0)=\big(\frac{18}{46}+\frac{4}{46}\big)/2=\frac{11}{46}$. These temporal relaxations for two signals yield the following relaxed specifications: $\Phi^{\tau'}_2=G_{[15,60-\tau_{\underline{G}}]}\xs\geq h_1\wedge F_{[75,120]}\xs\leq h_2$ for $\xs'$ requiring less relaxation than that of $\Phi^{\tau''}_2=G_{[15+\tau_{\underline{G}},60-\tau_{\overline{G}}]}\xs\geq h_1\wedge F_{[75,120+\tau_{\overline{F}}]}\xs\leq h_2$, the relaxed specification for $\xs''$.
\end{continueexample} 

\begin{definition}\label{def:similarity}
(Time Interval Similarity\footnote{Note that this similarity measure is a form of Jaccard similarity coefficient when one interval is contained in the other.}) The similarity between two time intervals $I_1$ and $I_2$ is measured via 
\begin{equation}
\small
    \mathcal{S}(I_1,I_2):=\frac{\vert I_1\cap I_2\vert}{\max\big(\vert I_1\vert,\vert I_2\vert\big)}.
\end{equation}
\end{definition}
\begin{prop}\label{prop_3}
Let $\phi$ be an STL subtask with a single temporal operator over an original interval of $I=[a,b]$ as in \eqref{eq:STL_fragment}, e.g., $\phi=F_{[a,b]}\, \varphi$ or $\phi=G_{[a,b]}\, \varphi$. Suppose that two feasible minimal relaxations of $\phi$, $\tau'(\xs,\phi,t)$, $\tau''(\xs,\phi,t)\in[0,1]$, are obtained over the relaxed intervals of $I'$ and $I''$, respectively. If $\tau'(\xs,\phi,t)\leq \tau''(\xs,\phi,t)$ under the same user tolerances $\gamma_F$ and $\gamma_G$, then $\mathcal{S}(I',I)\geq\mathcal{S}(I'',I)$.
\end{prop}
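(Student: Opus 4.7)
The plan is to proceed by a two-case analysis based on whether the single temporal operator in $\phi$ is \emph{finally} or \emph{globally}, since the two cases invoke different relaxation metrics and have opposite containment relations between $I'$, $I''$ and the original $I$. In both cases, the underlying idea is the same: translate the hypothesis $\tau'\le\tau''$ into an inequality on the interval lengths, and then translate that back into an inequality on $\mathcal{S}(\cdot,I)$ using Definition~\ref{def:similarity} together with the observation that one of the intervals is entirely contained in the other, which collapses $|I'\cap I|$ and $\max(|I'|,|I|)$ to simple expressions.

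For the \emph{globally} case, $\phi=G_{[a,b]}\varphi$, the relaxed interval $I'=[a+\tau'_{\underline{G}},b-\tau'_{\overline{G}}]$ satisfies $I'\subseteq I$, so $|I'\cap I|=|I'|$ and $\max(|I'|,|I|)=|I|$, giving $\mathcal{S}(I',I)=|I'|/|I|$, and similarly for $I''$. From \eqref{eq:tau_G}, the inequality $\tau'\le\tau''$ is equivalent to $\tau'_{\underline{G}}+\tau'_{\overline{G}}\le \tau''_{\underline{G}}+\tau''_{\overline{G}}$ under the shared tolerance $\gamma_G$, which directly yields $|I'|=|I|-(\tau'_{\underline{G}}+\tau'_{\overline{G}})\ge|I|-(\tau''_{\underline{G}}+\tau''_{\overline{G}})=|I''|$ and hence the desired inequality on similarities.

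For the \emph{finally} case, $\phi=F_{[a,b]}\varphi$, now $I\subseteq I'$ and $I\subseteq I''$, so $|I'\cap I|=|I|$ and $\max(|I'|,|I|)=|I'|$, and similarly for $I''$; thus $\mathcal{S}(I',I)=|I|/|I'|$ and $\mathcal{S}(I'',I)=|I|/|I''|$. Now I would invoke Proposition~\ref{prop_1} to use that, for any \emph{feasible minimal} relaxation, at least one of $\tau_{\underline{F}},\tau_{\overline{F}}$ is zero, so $\max(\tau_{\underline{F}},\tau_{\overline{F}})=\tau_{\underline{F}}+\tau_{\overline{F}}$. Consequently $|I'|=|I|+\max(\tau'_{\underline{F}},\tau'_{\overline{F}})$ and similarly for $I''$, and the hypothesis $\tau'\le\tau''$ from \eqref{eq:tau_F} reduces to $\max(\tau'_{\underline{F}},\tau'_{\overline{F}})\le\max(\tau''_{\underline{F}},\tau''_{\overline{F}})$. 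This immediately yields $|I'|\le|I''|$ and therefore $\mathcal{S}(I',I)=|I|/|I'|\ge|I|/|I''|=\mathcal{S}(I'',I)$, completing the case.

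The main obstacle is the \emph{finally} case: without Proposition~\ref{prop_1}, the relaxation metric uses $\max$ while the interval length uses the sum $\tau_{\underline{F}}+\tau_{\overline{F}}$, and these two quantities are not monotonically related in general, so one could imagine intervals where a lower $\max$ corresponds to a larger sum and hence a larger $|I'|$. Invoking minimality via Proposition~\ref{prop_1} is exactly what collapses $\max$ to the sum and makes the argument go through; this is the only non-routine step, with the rest being direct algebra on the similarity formula.
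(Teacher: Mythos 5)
Your proof is correct and follows essentially the same route as the paper's: a case split on \emph{finally} versus \emph{globally}, using Proposition~\ref{prop_1} in the \emph{finally} case and direct algebra with Definition~\ref{def:similarity} in both. The only (harmless) difference is that the paper disposes of the \emph{finally} case by assuming WLOG that both relaxations occur on the same side ($\tau'_{\underline{F}}=\tau''_{\underline{F}}=0$), whereas you avoid that assumption by observing that minimality collapses $\max(\tau_{\underline{F}},\tau_{\overline{F}})$ to the sum, which handles relaxations on different sides uniformly.
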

\begin{proof}
\textbf{\textit{Finally Case:}}  \textcolor{black}{As shown in Prop. \ref{prop_1}, minimal temporal relaxation of finally operator enables one of the relaxation parameters $\tau_{\underline{F}}$ or $\tau_{\overline{F}}$ to be equal to zero.} Without loss of generality, suppose that the relaxations take place at the right hand side (future) of the original intervals, i.e., $\tau'_{\underline{F}}=\tau''_{\underline{F}}=0$. Then the condition in the premise implies $\tau'_{\overline{F}}\leq\tau''_{\overline{F}}$ according to \eqref{eq:tau_F}. It follows from the relaxed intervals $I'\!=\![a,b+\tau'_{\overline{F}}]$ and $I''\!=\![a,b+\tau''_{\overline{F}}]$ that  $\mathcal{S}(I'\!,I)\!\geq\!\mathcal{S}(I''\!,I)$ as per Def. \ref{def:similarity}.

\noindent\textbf{\textit{Globally Case:}} Suppose that the relaxed intervals are $I'=[a+\tau'_{\underline{G}},b-\tau'_{\overline{G}}]$ and $I''=[a+\tau''_{\underline{G}},b-\tau''_{\overline{G}}]$. Then the condition in the premise implies $\tau'_{\underline{G}}+\tau'_{\overline{G}}\leq\tau''_{\underline{G}}+\tau''_{\overline{G}}$ according to \eqref{eq:tau_G}; thereby, $\mathcal{S}(I',I)\geq\mathcal{S}(I'',I)$ as per Def. \ref{def:similarity}.
\end{proof}
\vspace{-1mm}
Hence, keeping the temporal relaxation minimum is desired to achieve a relaxed specification as close as possible to the original one. 

\begin{problem} \label{problem_1}
Given an initial state $\xs_0$, an STL specification $\Phi$, some user-defined relaxation tolerances $\gamma_F$ and $\gamma_G$, the minimal temporal relaxation problem for a dynamical system can be formulated as an optimization problem as follows:
\begin{equation}\small
\begin{split}
\us^*=arg\,min \;& \tau(\xs,\Phi,0) \\
s.t.\;\;\;\;&\xs^+_t=f(\xs_t,\us_t),\\
&\xs_t\in\mathcal{X},\ \us_t\in\mathcal{U},\ \forall t,
\end{split}
\label{eq:optimization}
\end{equation}

\noindent where $f:\mathbb{R}^n\times\mathbb{R}^m\to\mathbb{R}^n$ denotes the dynamics of the system; $\mathcal{X}$ and $\mathcal{U}$ are the admissible state and control sets, respectively.
\end{problem}

If $\Phi$ can be satisfied by a trajectory starting from the given initial state $\xs_0$, then solving Problem \ref{problem_1} generates a feasible trajectory. Otherwise, solving it results in a trajectory that minimally relaxes the time bounds of $\Phi$ under the allowable relaxations provided by the user. Such a trajectory is different than the one that can be found by maximizing space robustness (which does not modify the original time bounds) and time robustness. 
\textcolor{black}{\subsection{Temporal Relaxation vs. Time Robustness}} \label{sec:temp_rlx_vs_TR}

Given an initial state $\xs_0$, if $\Phi$ cannot be satisfied by a trajectory starting from $\xs_0$, the trajectories found by minimizing temporal relaxation and maximizing time robustness are not the same. In the literature, either right or left time robustness is maximized and addressing both past and future relaxations at the same time is missing. Even if a unified approach can be proposed to account for this, there exists a major difference regarding the computation of time robustness, which uses $\min$/$\max$ functions at each level. This leads to the major violation cases to dominate the value of the overall time robustness and cannot differentiate the satisfaction/violation of the other subtasks (as illustrated in Example \ref{example_1}). Furthermore, the time robustness becomes inconclusive when it is equal to zero. This may potentially yield problems in
differentiating between a near miss and a tight satisfaction. For instance, let a specification be given as $\phi=F_{[5,10]}\mu$ with two scenarios: either $(\xs,10)\models\mu$ (tight satisfaction) or $(\xs,11)\models\mu$ (near miss). As the (right) time robustness is calculated by $\theta^+(\xs,\phi,0)\!=\!\max_{t\in{[5,10]}}\theta^+(\xs,\mu,t)\!=\!\theta^+(\xs,\mu,10)\!=\!0$ for both cases, we have the same time robustness score for them. Hence, satisfying $\phi$ within the specified time interval but at the very last step or violating it by one time step becomes indifferent. Hence, determining the need of relaxing each subtask is not possible by using the notion of time robustness.

Moreover, maximizing time robustness in the presence of a violation leads to checking the overall mission horizon to ensure the satisfaction of the corresponding subtask (or predicate) at least once so that a finite negative time robustness can be obtained. However, this can cause the mandatory satisfaction instant to be arbitrarily far from the original interval which may not be desirable for some missions. On the other hand, the proposed formulation can accommodate user input for maximum relaxations and minimize temporal relaxation under the allowable relaxation tolerances. If a subtask requires relaxation more than the allowable relaxation, then that subtask is removed from the specification. Note that a similar subtask removal by maximizing time robustness might be possible. However, that becomes a more computationally intensive process as it requires to iteratively compute trajectories and their corresponding time robustness values and decide to remove a subtask if the resulting time robustness is smaller than a user-defined threshold. On the other hand, the proposed approach of minimizing temporal relaxation does not require iterative trajectory computation and solves the problem in single shot while still incorporating the user preferences.

\vspace{-2mm}
\textcolor{black}{\section{Solution Approach} \label{sec:soln_approach}}
Mixed-integer encoding of STL constraints is a common approach in control synthesis under an STL specification \cite{raman}. If the dynamics $f:\mathbb{R}^n\times\mathbb{R}^m\to\mathbb{R}^n$ is linear (e.g., $\xs_{t+1}=A\, \xs_t+B\, \us_t$ where $A\in\mathbb{R}^{n\times n}$ and $B\in\mathbb{R}^{n\times m}$ are the system matrices), the mixed-integer encoding of the STL specification $\Phi$ and the temporal relaxation metric $\tau(\xs,\Phi,0)$ renders the optimization problem in \eqref{eq:optimization} a mixed-integer linear program (MILP) as long as the predicate function $p(\xs_t)$ is linear as well.

The authors of \cite{rodionova2021time} use the framework in \cite{raman} to maximize right time robustness by counting the consecutive satisfactions and violations via MILP encoding. While we use a similar idea of counting constraints, there are substantial differences as the counting take place within particular time intervals and the counters are used to encode the temporal relaxation metric according to \eqref{eq:tau_F}, \eqref{eq:tau_G}, and \eqref{eq:temp_relax_metric}.

We start by encoding the satisfaction of predicates via the standard big-M approach using binary variables $z_t\in\{0,1\}$ as follows:
\vspace{-2mm}
\begin{equation}
\small
\begin{matrix}
\mu =p(\xs_t) \ge 0\;\;\Longleftrightarrow \left\{\quad \begin{matrix*}[l]
p(\xs_t)\ge M( z^{\mu}_t-1), \\
p(\xs_t)\le M z^{\mu}_t-\varepsilon, \end{matrix*}\right.
\end{matrix}
\label{eq:big_M}    
\end{equation}
\normalsize
\noindent
where $M, \varepsilon \in \mathbb{R}^{+}$  are sufficiently large and small numbers, respectively. Note that $\varphi$ includes the predicates $\mu$ and the combination of them according to \eqref{eq:STL_fragment}. Now let $z^\varphi_t$ be a binary indicator such that $z^\varphi_t=1\Longleftrightarrow (\xs,t)\models\varphi$ and $z^\varphi_t=0\Longleftrightarrow (\xs,t)\not\models\varphi$. The conjunction and disjunction of the predicates, i.e., the Boolean satisfaction of $\varphi$ in \eqref{eq:STL_fragment} can then be encoded as \cite{raman}:

\noindent\textbf{\textit{Conjunction}}
\begin{equation}
\setlength{\jot}{.8pt}
\small
\begin{split}
z^{\varphi}_t&\leq z^{\varphi_{i}}_t,\;\;i=1, \ldots , m\, , \\
z^{\varphi}_t&\geq 1-m+ \sum_{i=1}^{m}z^{\varphi _{i}}_t\, .
\end{split}
\label{eq:z_conjunction}
\end{equation}
\textbf{\textit{Disjunction}}
\begin{equation}
\setlength{\jot}{.8pt}
\small
\begin{split}
z^{\varphi}_t&\geq z^{\varphi_{i}}_t,\;\;i=1, \ldots , m \, ,\\
z^{\varphi}_t&\leq \sum_{i=1}^{m}z^{\varphi _{i}}_t\, .
\end{split}
\label{eq:z_disjunction}
\end{equation}

 Next, we will present the MILP encoding of variables that count the consecutive time instances of satisfaction (i.e., $z^\varphi_t\!=\!1$) and violation (i.e., $z^\varphi_t\!=\!0$), respectively. Such an encoding is proposed in \cite{rodionova2021time} only forward in time to calculate right time robustness. To see signal behaviors such as cumulative temporal relaxation (as introduced in \eqref{eq:temp_relax_metric_a}), we formulate a similar satisfaction/violation counting mechanism both forward and backward in time. Then we follow the recursive definition of the general temporal relaxation metric proposed in \eqref{eq:tau_F}, \eqref{eq:tau_G}, and \eqref{eq:temp_relax_metric} to build MILP constraints representing them.
 
 We will use the variables of $r_t^{1}(\varphi),l_t^{1}(\varphi)\in\mathbb{Z}_{\geq0}$ and $r_t^{0}(\varphi),l_t^{0}(\varphi)\in\mathbb{Z}_{\leq0}$ which denote the number of instants with consecutive satisfaction $(1)$ and violation $(0)$ of $\varphi$ starting from $t$ toward the future $(r)$ and past $(l)$, respectively. The numbers of consecutive past satisfaction and violation instants of $\varphi$ starting from $t$ (for $t'\leq t$) are defined forward in time, respectively, as follows:
    \begin{equation}\label{eq:left_counter}\small
    \begin{split}
        &l^{1}_t(\varphi) = z^\varphi_t\cdot(l^{1}_{t-1}(\varphi)+1),\\
&l^{0}_t(\varphi) = (1-z^\varphi_t)\cdot(l^{0}_{t-1}(\varphi)-1).
    \end{split}
    \end{equation}

Similarly, the numbers of consecutive satisfaction and violation instants of $\varphi$ at future starting from $t$ (for $t'\geq t$) are defined backward in time, respectively, as below
    \begin{equation}\label{eq:right_counter}\small
    \begin{split}
&r^{1}_t(\varphi) = z^\varphi_t\cdot(r^{1}_{t+1}(\varphi)+1),\\
&r^{0}_t(\varphi) = (1-z^\varphi_t)\cdot(r^{0}_{t+1}(\varphi)-1).
    \end{split}
    \end{equation}
    
By construction, $r_t^1(\varphi)$ and $l_t^1(\varphi)$ count the maximum numbers of consecutive instants with $z_{t'}^\varphi = 1$ for $t'\geq t$ and $t'\leq t$, respectively. On the other hand, $r_t^0(\varphi)$ and $l_t^0(\varphi)$ count the maximum numbers of consecutive instants with $z_{t'}^\varphi = 0$ and
multiply these values by -1. 

For the given time intervals $I_F=I_G=[a,b]$, boundary conditions for the calculations in \eqref{eq:left_counter} and \eqref{eq:right_counter} are determined based on the type of temporal operator. For \textit{finally}, the calculations in \eqref{eq:tau_F_MILP} start forward and backward in time with $l^0_{t+a-\ceil*{\gamma_F\cdot\vert I_F\vert}}(\varphi)=0$ and $r^0_{t+b+\ceil*{\gamma_F\cdot\vert I_F\vert}}(\varphi)=0$, respectively, where $\ceil*{\gamma_F\cdot\vert I_F\vert}$ is the exclusive bound for the allowable relaxation. On the other hand, the boundary conditions for the calculations for \textit{globally} in \eqref{eq:tau_G_MILP} is started forward and backward in time with $l^1_{t+a-1}=0$ and $r^1_{t+b+1}=0$, respectively. 

We continue with the MILP encoding of temporal relaxation metrics for \textit{finally} and \textit{globally} operators constructed in \eqref{eq:tau_F} and \eqref{eq:tau_G}, respectively.\\
\textit{\textbf{Finally}}
\begin{equation}\label{eq:tau_F_MILP}
\small
\tau(\xs,F_{[a,b]} \varphi, t)=(z_t^{F_{[a,b]} \varphi}-1)\cdot\frac{\max\big(l^{0}_{t+a}(\varphi), r^{0}_{t+b}(\varphi)\big)}{\ceil*{\gamma_F\cdot\vert I_F\vert}},
\end{equation}

\noindent where the variable $z_t^{F_{[a,b]} \varphi}$ denotes the Boolean satisfaction of the original subtask, i.e., $z_t^{F_{[a,b]} \varphi}=1\Longleftrightarrow(\xs,t)\models F_{[a,b]} \varphi$, and its MILP encoding is similar to \textit{disjunction} in \eqref{eq:z_disjunction} as
\begin{equation}\label{eq:z_finally}
\small
    z_t^{F_{[a,b]}\varphi}=\bigvee_{t'=t+a}^{t+b}z_{t'}^{\varphi}\ .
\end{equation} 

Hence, if the \textit{finally} subtask is already satisfied there is no need to consider relaxation beyond the original interval or demanding more satisfactory time instants. This way, we keep the original requirement of satisfying $\varphi$ at any single instant within $[a,b]$ if possible, or at an instant as close as possible to the original interval otherwise.  

\noindent\textit{\textbf{Globally}}
\begin{equation}\label{eq:tau_G_MILP}
\small
\begin{split}
    &\tau(\xs,G_{[a,b]} \varphi, t)=\\&1-z_t^{G_{[a+\beta,b-\beta]}\varphi}\cdot\Bigg(1-\frac{\vert I_G\vert-l^{1}_{t+\floor*{\frac{a+b}{2}}}(\varphi)- r^{1}_{t+\ceil*{\frac{a+b}{2}}}(\varphi)}{{\gamma_G\!\cdot\!\vert I_G \vert}}\Bigg),
\end{split}
\end{equation}

\noindent where $\beta=\floor*{\gamma_G\!\cdot\!\frac{\vert I_G\vert}{2}}$ and the fraction is the ratio of failed time instants to the allowed relaxation. The variable $z_t^{G_{[a+\beta,b-\beta]} \varphi}$ denotes the Boolean satisfaction of the subtask under maximum allowable relaxation, and its MILP encoding is similar to conjunction in \eqref{eq:z_conjunction} as
\begin{equation}\label{eq:z_globally}
\small
    z_t^{G_{[a+\beta,b-\beta]}\varphi}=\bigwedge_{t'=t+a+\beta}^{t+b-\beta}z_{t'}^{\varphi}\ .
\end{equation} 

Therefore, the \textit{globally} subtask can be satisfied under an allowable relaxation, and otherwise removed. Note that when $\gamma_G=1$, the Boolean satisfaction variable becomes redundant as even a single satisfied time instant is an acceptable relaxation for the $\gamma_G=1$ case. The encoding in \eqref{eq:tau_G_MILP} enforces the satisfaction at the middle and possible relaxations at both ends as required by the definition in \eqref{eq:tau_G}. As $\floor*{\frac{a+b}{2}}$ and $\ceil*{\frac{a+b}{2}}$  denote the same time step when $a+b$ is even, for such cases, we consider the satisfaction at only one of these two instants in the numerator of \eqref{eq:tau_G_MILP} to prevent  double counting.

After encoding the temporal relaxation metrics of $\tau(\xs,F_{[a,b]} \varphi,t)$ and $\tau(\xs,G_{[a,b]} \varphi,t)$, the MILP formulation of $\tau(\xs,\Phi,t)$ for the overall STL specification $\Phi$ can be done according to the recursive definitions in \eqref{eq:temp_relax_metric} and using the recipe in \cite{raman} for quantitative encoding of $\min$/$\max$ functions. 

\SetKwInOut{Input}{input}
\SetKwInOut{Output}{output}
\SetKwInOut{Initialize}{initialize}
\SetKwInOut{Return}{return}
\begin{algorithm}[htb!]\small
\setlength{\jot}{.5pt}
\SetAlgoLined
\Input{Initial state $\xs_{0}$, linear dynamics $f(\xs,\us)$, STL specification $\Phi$, and user tolerance inputs $\gamma_F$ and $\gamma_G$.}
Define constraints of the dynamics along the mission horizon $T$\;
Formulate MILP constraints on $z_t^\varphi$ for each subtask $\varphi$ over $t\in[0,T]$ (i.e., predicates and their conjunction/disjunction)\;
Formulate STL constraints to define $\tau(\xs,\Phi,0)$ built recursively upon the core temporal relaxation metrics $\tau(\xs,F_{[a,b]} \varphi,t)$ in \eqref{eq:tau_F} and $\tau(\xs,G_{[a,b]} \varphi,t)$ in \eqref{eq:tau_G} according to \eqref{eq:temp_relax_metric}\;
Solve \eqref{eq:optimization} in MILP format to extract state and input sequences\; 
\Output{Relaxed STL specification $\Phi^\tau$ with the temporal relaxation amounts on each subtask, state trajectory $\xs$ and input policy $\us^*$ that achieve $\Phi^\tau$.}
\caption{STL Control Synthesis under Minimal Temporal Relaxation using MILP encoding}
\label{alg:1}
\end{algorithm}

{Unlike the time robustness, temporal relaxation metric is not defined throughout the whole mission horizon. In this regard, much less variables are used in MILP encoding compared to time robustness yielding substantially faster solutions in addition to obtaining different signal behaviors. Moreover, counting the violations only for the \textit{finally} and the satisfactions only for the \textit{globally} operator within predetermined bounds ($\overline{I}_F$ and $\overline{I}_G$) keeps the number of optimization variables low and contribute to the computational efficiency.}

\begin{theorem}
For an STL specification $\Phi$ defined using the syntax in \eqref{eq:STL_fragment} with linear predicates, if any subtask $\Phi_j$ in $\Phi=\bigwedge_{i=1}^k \Phi_i$ is feasible alone, then Alg. 1 returns $\tau(\xs,\Phi,t)<1$ implying that the spatial requirement of at least one subtask is satisfied.
\end{theorem}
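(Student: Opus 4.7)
The plan is to produce an explicit feasible trajectory whose relaxation metric is strictly below $1$, and then appeal to the global optimality of the MILP solved by Algorithm~1. First I would unfold the conjunction rule \eqref{eq:temp_relax_metric_a} across the parse tree of $\Phi=\bigwedge_{i=1}^k \Phi_i$ to obtain the convex-combination identity
\begin{equation*}
\tau(\xs,\Phi,0) \;=\; \sum_{i=1}^k w_i\,\tau(\xs,\Phi_i,0),
\end{equation*}
where $w_i = 2^{-d_i}>0$ and $d_i$ is the depth of leaf $\Phi_i$ in whichever binary grouping is used to parse the conjunction; a Kraft-style identity for full binary trees gives $\sum_i w_i = 1$. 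Hence, for any trajectory $\xs$, the overall metric is a convex combination of the subtask metrics, regardless of how $\bigwedge_{i=1}^k$ is parenthesized.

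Next I would build the witness trajectory. By the feasibility hypothesis on $\Phi_j$, there exist $\xs^{\dagger},\us^{\dagger}$ satisfying the dynamics $\xs^+_t = f(\xs_t,\us_t)$, the admissible sets $\mathcal{X},\mathcal{U}$, and $(\xs^{\dagger},0)\models \Phi_j$. Plugging this satisfying trajectory into the definitions \eqref{eq:tau_F} and \eqref{eq:tau_G} with the relaxation parameters set to zero yields $\tau(\xs^{\dagger},\Phi_j,0)=0$ (for nested operators this extends by induction using \eqref{eq:temp_relax_metric_c}--\eqref{eq:temp_relax_metric_d}). For every remaining subtask, Corollary~\ref{corollary_1} gives $\tau(\xs^{\dagger},\Phi_i,0)\in[0,1]$, so
\begin{equation*}
\tau(\xs^{\dagger},\Phi,0) \;\le\; w_j \cdot 0 + \sum_{i\neq j} w_i\cdot 1 \;=\; 1 - w_j \;<\; 1.
\end{equation*}
Since $\xs^{\dagger}$ is feasible for Problem~\ref{problem_1} and Algorithm~1 returns the global optimum of the corresponding MILP (the problem stays a MILP under the linear-predicate assumption), the returned $\xs^{*}$ satisfies $\tau(\xs^{*},\Phi,0) \le \tau(\xs^{\dagger},\Phi,0) < 1$. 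The ``spatial requirement'' part then follows because a convex combination of values in $[0,1]$ with positive weights is strictly less than $1$ only if at least one term is strictly less than $1$; and by the encodings \eqref{eq:tau_F_MILP}--\eqref{eq:tau_G_MILP}, $\tau(\xs^{*},\Phi_i,0)<1$ forces the innermost predicate of $\Phi_i$ to be met at some instant inside the allowable relaxation window (otherwise the metric saturates at $1$).

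The main obstacle is the bookkeeping behind the convex-combination identity. Because the binary grouping of $\bigwedge_{i=1}^k \Phi_i$ is not unique, I must verify by a short induction on the parse tree that the weights $w_i=2^{-d_i}$ are positive and sum to unity regardless of the associativity chosen, which reduces to Kraft's equality for the leaves of a full binary tree. A secondary subtlety is the inductive argument that $(\xs,0)\models\Phi_j$ implies $\tau(\xs,\Phi_j,0)=0$ in the presence of nested temporal operators; this has to be unwound carefully with \eqref{eq:temp_relax_metric_c}--\eqref{eq:temp_relax_metric_d} together with \eqref{eq:tau_F}--\eqref{eq:tau_G}. Once both pieces are in place, the remainder of the proof is a direct application of Corollary~\ref{corollary_1} combined with the optimality of the MILP solver.
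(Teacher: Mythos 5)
Your proposal is correct and rests on the same core idea as the paper's proof: the feasibility of $\Phi_j$ alone guarantees a dynamically admissible trajectory on which $\Phi_j$ incurs zero relaxation, the averaging semantics of conjunction then caps the overall metric strictly below $1$, and the global optimality of the MILP transfers this cap to the returned solution. The packaging differs in two ways worth noting. First, the paper argues via a two-case analysis (Case 1: $\Phi_j$ completed on time gives $\tau(\xs,\Phi,0)\leq\frac{k-1}{k}$; Case 2: a contradiction if the optimizer relaxed $\Phi_j$ yet returned value $1$), which is really an indirect way of stating your ``witness plus optimality'' argument; your version is more direct and makes explicit the construction of the feasible witness $\xs^{\dagger}$ from the hypothesis, a step the paper leaves implicit. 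Second, the decompositions differ: the paper implicitly flattens the $k$-fold conjunction into a uniform average with weights $1/k$, whereas you unfold the binary rule \eqref{eq:temp_relax_metric_a} and obtain parse-tree weights $w_i=2^{-d_i}$ summing to one (Kraft), giving the bound $1-w_j$ instead of $\frac{k-1}{k}$. Your bookkeeping is the more faithful reading of the stated binary semantics (the two coincide only for balanced or genuinely $k$-ary averaging), and either version suffices for the strict inequality. You also spell out the final implication --- that $\tau(\xs^{*},\Phi,0)<1$ forces some $\tau(\xs^{*},\Phi_i,0)<1$, which by the encodings \eqref{eq:tau_F_MILP}--\eqref{eq:tau_G_MILP} means some predicate is met within its allowable window --- which the paper asserts only implicitly. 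One small caution: in view of the paper's footnote to Proposition \ref{prop_2}, the step ``$(\xs^{\dagger},0)\models\Phi_j$ yields $\tau(\xs^{\dagger},\Phi_j,0)=0$'' should be read through the minimization (or the MILP encodings), under which the zero value is indeed attained, exactly as you indicate by setting the relaxation parameters to zero; with that reading your induction over nested operators goes through.
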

\vspace{-2mm}
\begin{proof}
 MILP encoding of the temporal relaxation in lines 2-3 of Alg. 1 implies by construction (via Eqs. \eqref{eq:tau_F_MILP} and \eqref{eq:tau_G_MILP}) that the minimizer of the optimization problem solved in line 4 chooses the lowest cumulative temporal relaxation metric among the alternatives. Suppose that $\Phi_j$ is a feasible subtask in $\Phi$. Now we will consider two cases based on the satisfaction status of the $\Phi_j$:
 
 \noindent\textbf{\textit{Case 1 ($\Phi_j$ is completed on time):}} Since no relaxation is needed for $\Phi_j$, even if all other subtasks are completely violated $(\text{i.e.,}\ \tau(\xs,\Phi_i,0)=1,\ \forall i\not=j)$, the resultant cumulative temporal relaxation is bounded as $\tau(\xs,\Phi,0)\leq\frac{\tau(\xs,\Phi_j,0)+k-1}{k}$. It follows from $\tau(\xs,\Phi_j,0)=0$ that $\tau(\xs,\Phi,0)\leq\frac{k-1}{k}<1$. Let us denote $\tau(\xs,\Phi,0)$ in this case as $\tau_{case_1}$ for future reference.
 
 \noindent\textbf{\textit{Case 2 ($\Phi_j$ is temporally relaxed or entirely removed):}}
    Suppose that the feasible subtask $\Phi_j$ is relaxed, and the algorithm still returns $\tau(\xs,\Phi,0)=\frac{\tau(\xs,\Phi_j,0)+\sum_{i\not=j}\tau(\xs,\Phi_i,0)}{k}=1$ meaning no spatial requirement is achieved within the relaxed time intervals. However, for the Alg. 1 to return a temporal relaxation for $\Phi_j$, instead of $\tau_{case_1}$, we need to have $\frac{\tau(\xs,\Phi_j,0)+\sum_{i\not=j}\tau(\xs,\Phi_i,0)}{k}\leq\tau_{case_1}\leq\frac{k-1}{k}$ which is a contradiction and yields $\tau(\xs,\Phi,0)<1$ since $0\leq\tau(\xs,\Phi,0)\leq1$ as presented in Corollary \ref{corollary_1}.
\end{proof}
\vspace{-6mm}
\textcolor{black}{\section{Case Studies} \label{sec:case_studies}}
\begin{table*}[htb!]
\centering
    \caption{Simulation results and optimization parameters for temporal relaxation $\big(\tau(\xs,\Phi_{case},0)\big)$ minimization and standard time robustness $\big(\theta^+(\xs,\Phi_{case},0)$ and $\theta^-(\xs,\Phi_{case},0)\big)$ maximization.}
    \label{table:benchmark}
    \begin{tabular}{|c| c| c c| c c| c|}
     \hline \\ [-1.2em]
     \multirow{2}{*}{Optimized Metric} & \multirow{2}{*}{\# Constraints} & \multicolumn{2}{c}{\# Variables}\vline & \multicolumn{2}{c}{Computation Time $[s]$}\vline & \multirow{2}{*}{\begin{tabular}{@{}c@{}}Temporal \\ Relaxation $\in[0,1]$\end{tabular}}\\
     & &Continuous & Integer & YALMIP & Solver & \\ \hline
     $\tau(\xs,\Phi_{case},0)$  & 5505 & 310 & 1821 & 1.66 & 39.2 & 0.277 \\ \hline
     $\theta^+(\xs,\Phi_{case},0)$  & 8199 & 970 & 1861 & 1.98 & 1118.6 & 0.465 \\ \hline
     $\theta^-(\xs,\Phi_{case},0)$  & 8175 & 964 & 1861 & 2.32 & 632.9 & 0.461 \\ \hline
    \end{tabular}
\end{table*}
\begin{figure*}[htb!]
    \centering
    \includegraphics[trim=120 6 120 15,clip, width=\linewidth]{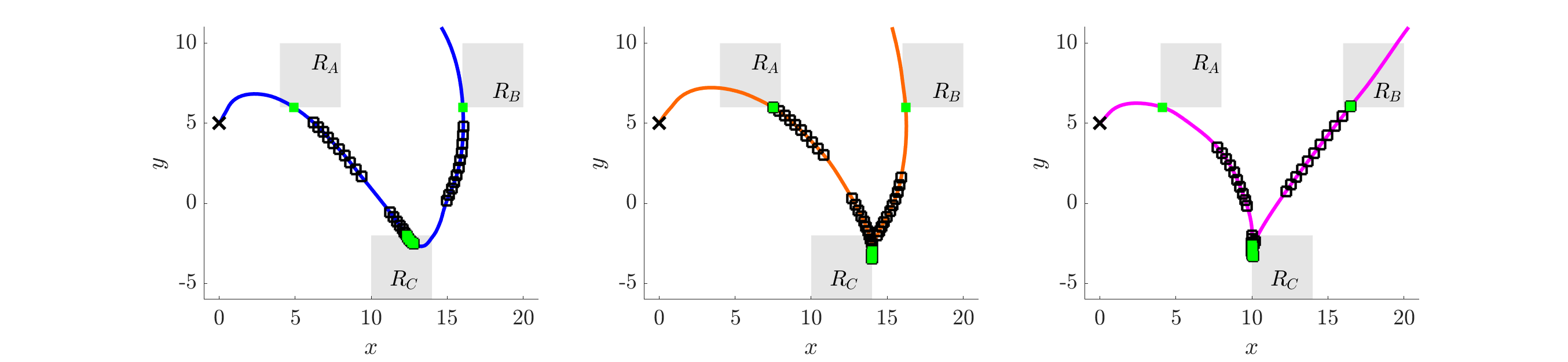}
    \caption{Trajectories of the robot under the proposed minimal temporal relaxation control approach (on the left) together with right and left time robustness maximizing approaches (on the middle and right, respectively). The mission requirements are defined in \eqref{eq:case_STL}. Initial position is marked by $\times$. The original \textit{globally} and \textit{finally} intervals are denoted by discrete black squares where satisfaction instants within or closest to them are represented via green fillings.}
    \label{fig:trajectories}
\end{figure*}
We develop a control synthesis tool that generates trajectories
satisfying the given STL specification under minimal temporal relaxation by solving
the problem in \eqref{eq:optimization}. We formulate the problem 
in YALMIP \cite{yalmip} and solve it using Gurobi \cite{gurobi} in MATLAB R2021a. A laptop computer with 1.8 GHz, Intel Core i5 processor is used to run the simulations with the mission horizon of $T\!=\!120\, s$. We use $\gamma_F\!=\!\gamma_G\!=\!1$ to bound the relaxations over \textit{finally} and \textit{globally} subtasks, respectively. When the subtask intervals are close to $t=0$ or $t=T$, we shift the signals as they are defined on $\mathbb{Z}_{\geq0}$ or extend the mission horizon $T$ accordingly so that the allowable relaxation bounds $\overline{I}_{F}$ and $\overline{I}_{G}$ are captured within $[0,T]$.

To illustrate the benefits of the new temporal relaxation metric, here we address control synthesis for an autonomous robot with discrete-time double-integrator dynamics as
\begin{equation}\label{eq:case_dynamics}
\small
\xs_{t}^+=
   \begin{bmatrix}
      1 & \Delta t & 0 & 0 \\
      0 & 1 & 0 & 0 \\
      0 & 0 & 1 & \Delta t \\
      0 & 0 & 0 & 1
   \end{bmatrix} 
   \xs_t +
   \begin{bmatrix}
   0.5 \Delta t^2 & 0\\
   \Delta t & 0\\
   0 & 0.5 \Delta t^2\\
   0 & \Delta t
   \end{bmatrix} \us_t,
\end{equation}
with the state vector $\xs=[x,v_x,y,v_y]^T$ where $v_x$, $v_y\in\mathbb{R}$ are the velocities in $x$, $y\in\mathbb{R}$ directions, respectively; and the input vector $\us=[u_x,u_y]^T$ where $u_x$,$u_y\in\mathbb{R}$ are the accelerations along the given directions with the limit of $\vert u_x\vert,\!\vert u_y\vert\!\leq\!2.2$. The mission scenario requires the robot to visit $R_A$ and $R_B$ some time within $[32,42]$ and $[77,87]$, respectively, and always stay inside $R_C$ within $[47,67]$. These requirements are expressed by the STL specification:
\begin{equation}\label{eq:case_STL}
\small
\Phi_{case} = F_{[32,42]} R_A \wedge F_{[77,87]} R_B \wedge G_{[47,67]} R_C,    
\end{equation}
where visiting regions can be captured by the conjunction of linear predicates as follows:
\begin{equation}\small
    \begin{split}
        R_A&=x\geq 4\wedge x\leq 8\wedge y\geq 6\wedge y\leq 10,\\
        R_B&=x\geq 16\wedge x\leq 20\wedge y\geq 6\wedge y\leq 10,\\
        R_C&=x\geq 10\wedge x\leq 14\wedge y\geq -6\wedge y\leq -2.\\
    \end{split}
\end{equation}

The STL control synthesis problem under $\Phi_{case}$ is solved by i) minimizing the proposed temporal relaxation metric $\tau(\cdot)$ using Alg. \ref{alg:1} and encoding in Sec. \ref{sec:soln_approach}, ii) maximizing right \big($\theta^+(\cdot)$\big) and left \big($\theta^-(\cdot)$\big) time robustness metrics via the approach in \cite{rodionova2021time} for comparison. Results are given in Table \ref{table:benchmark}, and realized trajectories with the marked satisfactory instances and the original intervals are presented in Fig. \ref{fig:trajectories}. The use of the proposed metric results in smaller temporal relaxation compared to time robustness maximizing trajectories. Furthermore, the less number of constraints and variables in our encoding yields more efficient computation time as well. As a result, the captured signal behaviors via all three approaches in Table \ref{table:benchmark} lead the following realized STL specifications with marked relaxations in the time intervals: 
\textcolor{black}{
\begin{equation}\label{eq:relaxed_case}
\small
\setlength{\jot}{.5pt}
    \begin{split}
        \Phi_{case}^\tau &= F_{[\boldsymbol{28^*},42]} R_A \wedge F_{[77,\boldsymbol{89^*}]} R_B \wedge G_{[\boldsymbol{53^*},67]} R_C,\\
    \Phi_{case}^{\theta^+} &= F_{[32,42]} R_A \wedge F_{[77,\boldsymbol{95^*}]} R_B \wedge G_{[\boldsymbol{59^*},\boldsymbol{65^*}]} R_C,\\     
    \Phi_{case}^{\theta^-} &= F_{[\boldsymbol{21^*},42]} R_A \wedge F_{[77,87]} R_B \wedge G_{[\boldsymbol{50^*},\boldsymbol{62^*}]} R_C.
    \end{split}
\end{equation}
}
As depicted above, using standard time robustness can assess relaxations toward either right or left. Therefore, one of the \textit{finally} subtasks had to be satisfied when maximizing each time robustness metric. Moreover, relaxation on one end of the \textit{globally} interval enables a smaller relaxation on the other end without additional cost, hence causes unnecessary relaxation. The proposed metric, on the other hand, cumulatively assess the relaxations on both ends and among other subtasks, hence satisfy an STL specification with time intervals as close as possible to the original ones.

\begin{figure}[htb!]
    \centering
    \includegraphics[trim=50 6 60 15,clip, width=\linewidth]{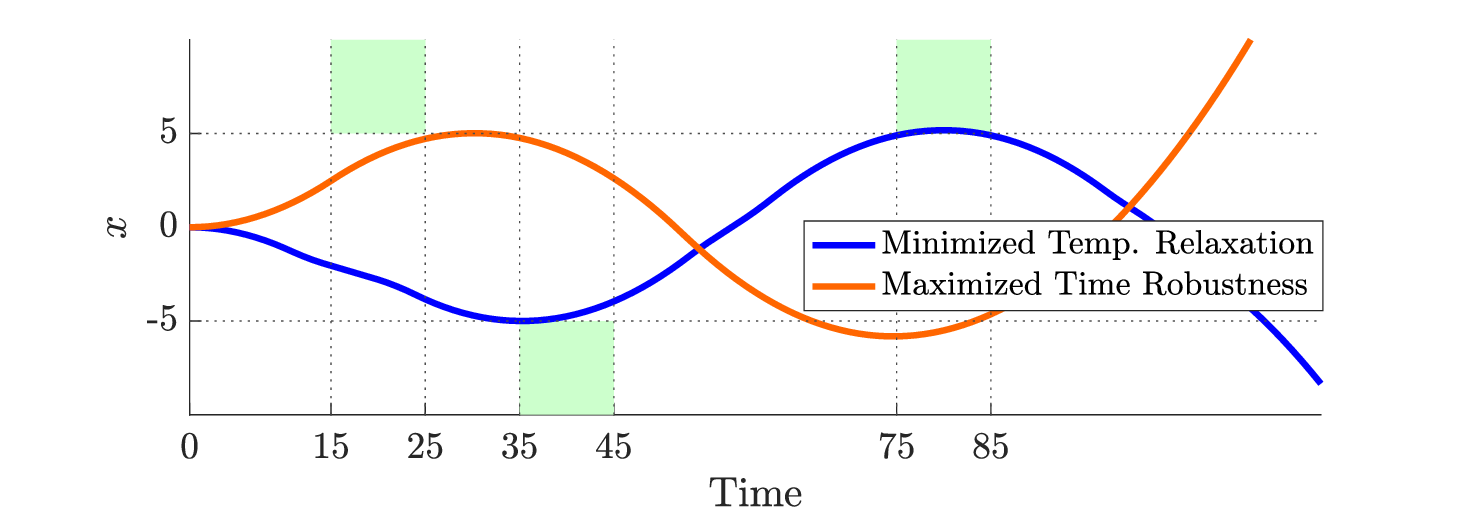}
    \caption{Time history of state $x$. Green regions are desired to be visited via $\Phi_{case'} = G_{[15,25]} x\geq5 \wedge F_{[35,45]} x\leq-5 \wedge F_{[75,85]} x\geq5$.}
    \label{fig:case2}
\end{figure}

\noindent\textbf{\textit{Relaxation via Structural Changes:}} Different from standard STL robustness metrics, the proposed temporal relaxation metric may result in subtask removal to avoid significant cascaded delays in subtask achievement. For instance, consider the specification: $\Phi_{case'}\! =\! G_{[15,25]} x\geq5\! \wedge\! F_{[35,45]} x\leq-5 \! \wedge\! F_{[75,85]} x\geq5$. As the time robustness can be arbitrarily small (and negative), even an utter delay is better than not achieving the subtask. Hence, maximizing it returns the trajectory in Fig.~\ref{fig:case2} which fails all subtasks. In fact, removing the one subtask may lead the satisfaction of others on time. However, the trajectory still tries to bound the violation which yields cascaded delays in the remaining subtasks. On the other hand, minimizing temporal relaxation metric requires the removal of \textit{globally} subtask and enables the completion of others within the original intervals. Hence, Alg. \ref{alg:1} returns relaxed specification of $\Phi^\tau_{case'} = F_{[35,45]} x\leq-5 \wedge F_{[75,85]} x\geq5$ without any relaxation on the time intervals. Moreover, the total solution time for minimizing temporal relaxation is $2.3\, s$ with $\tau(\xs,\Phi_{case'},0)=0.33$, i.e., one-third of the specification is compromised, where the time robustness maximization takes much longer, $59.8\, s$. 
\vspace{-1mm}

\textcolor{black}{\section{Conclusions}
\label{sec:conclusions}}
We introduce a metric that quantifies temporal relaxation of STL specifications. We propose a mixed-integer encoding for temporal relaxation metric and formulate an optimization problem to minimize it. We compare the behavior obtained by minimizing temporal relaxation with the one obtained by maximizing time robustness. We demonstrate that the proposed formulation is computationally efficient and leads to the satisfaction of modified STL specifications by minimally modifying the time intervals under the allowable relaxation limits.
\bibliography{references}
\end{document}